\documentclass[journal,10pt]{IEEEtran}
\usepackage{amsmath,amssymb,amsthm}
\usepackage{graphicx}
\usepackage{cite}
\usepackage{subfigure}
\usepackage{flushend}

\newtheorem{proposition}{Proposition}
\newtheorem{corollary}{Corollary}
\newtheorem{remark}{Remark}
\usepackage{array,booktabs,tabularx}
\begin{document}

\title{Mobility Information Capacity in the Sky:\ A Gaussian Channel Perspective}

\author{Weijie~Yuan,~\IEEEmembership{Senior Member,~IEEE}, Fan Liu,~\IEEEmembership{Senior Member,~IEEE}, Shuangyang Li,~\IEEEmembership{Member,~IEEE}, Lin Zhou,~\IEEEmembership{Senior Member,~IEEE}, and Pingzhi Fan,~\IEEEmembership{Fellow,~IEEE} 
\thanks{W.~Yuan and L. Zhou are with the School of Automation and Intelligent Manufacturing, Southern University of Science and Technology, Shenzhen 518055, China (e-mail: yuanwj@sustech.edu.cn, zhoul9@sustech.edu.cn).

F. Liu is with the National Mobile Communications Research Laboratory, Southeast University, Nanjing 210096, China (e-mail: fan.liu@seu.edu.cn).

S. Li is with the Faculty of Electrical Engineering and Computer Science, Technical University of Berlin, 10587 Berlin, Germany (e-mail: shuangyang.li@tu-berlin.de).

P. Fan is with the Institute of Mobile Communications, Southwest Jiaotong University, Chengdu  611756, China (e-mail: pzfan@swjtu.edu.cn).
}}

\maketitle

\begin{abstract}
Existing airspace capacity metrics mainly quantify occupancy or flow, although the same number of aerial vehicles may result in different motion alternatives. This letter establishes \emph{mobility information capacity} as an information-theoretic measure for low-altitude wireless networks. It quantifies the maximum information that trajectory observations reveal about intentional maneuver inputs under a given maneuver-resource budget and environmental uncertainty. For a common fixed feedback architecture, we formulate a lifted linear-Gaussian mobility channel and derive its finite-horizon log-determinant capacity. Cost and uncertainty whitening gives the spatiotemporal mobility eigenmodes, whose optimal maneuver-resource allocation follows water-filling. %We then prove an operational coding theorem using independent repetitions of the finite-horizon channel.
When the number of nondegenerate modes grows linearly with time and their efficiencies become asymptotically symmetric, we arrive at the Shannon-like law $R_M^{\rm G}=\frac{B_M}{2}\log_2(1+\mathrm{MNR})$, where MNR is the mobility-to-noise ratio. %We also show that a bounded-dimensional terminal output has zero asymptotic rate under bounded modal gains.
The proposed measure opens a motion-centric capacity perspective for the sky, while remaining a distinguishability baseline rather than a collision- or geometry-constrained airspace capacity.
\end{abstract}

\begin{IEEEkeywords}
Airspace capacity, low-altitude wireless networks, mobility information capacity, water-filling.
\end{IEEEkeywords}

\section{Introduction}

Low-altitude wireless networks (LAWNs) are evolving from aerial communication infrastructures into platforms that jointly support sensing, communication, and control~\cite{yuan_lawn,jun2026low}. This evolution raises a fundamental question: how should the capacity of a LAWN be characterized? Existing studies primarily quantify airspace capacity in terms of usable volume, vehicle density, or traffic flow~\cite{cho_capacity,aarts_capacity,cummings_congestion}. However, the same number of drones occupying the same airspace may admit markedly different route choices and speed profiles. Occupancy alone therefore cannot capture the diversity of feasible motion. This motivates a broader notion of airspace capacity, i.e., once vehicles enter a region, how much freedom of motion remains available?

Shannon's operational viewpoint provides a natural starting point. Capacity counts the asymptotic growth of reliably distinguishable alternatives under specified resources and uncertainty~\cite{shannon,cover_thomas}. For a LAWN, one may similarly ask how rapidly motion alternatives can grow while their observed trajectories remain distinguishable. We treat the maneuver sequence as the input, the noisy trajectory as the output, maneuver effort as the resource, and wind, execution error, and sensing noise as uncertainty. This perspective complements occupancy- and flow-based notions of airspace capacity by asking an important motion-centric question: under a given operating configuration, \emph{how much distinguishable motion can the airspace support?} 

Building on action-to-state information and empowerment~\cite{klyubin_empowerment,touchette_control,tiomkin_control,tiomkin_intrinsic}, we define mobility information capacity (MIC) as the maximum input--trajectory mutual information under a maneuver-resource budget. Table~\ref{tab:capacity_metrics} summarizes its relation to conventional airspace metrics. The proposed MIC can be used to compare sensing architectures, maneuver efficiency, and information loss from observed trajectory under a common resource budget. It does not directly determine safe aircraft counts or traffic throughput.

\begin{table}[t]
    \caption{Comparison of Conventional Airspace Metrics and MIC}
    \label{tab:capacity_metrics}
    \centering
    \footnotesize
    \setlength{\tabcolsep}{3pt}
    \renewcommand{\arraystretch}{1.15}
    \begin{tabularx}{\columnwidth}{
        @{}
        >{\raggedright\arraybackslash}p{0.21\columnwidth}
        >{\raggedright\arraybackslash}p{0.24\columnwidth}
        >{\raggedright\arraybackslash}X
        @{}
    }
        \toprule
        \textbf{Metric}
        & \textbf{Typical units}
        & \textbf{Meaning } \\
        \midrule

        Occupancy
        & Number of aircrafts 
        & Spatial loading  \\

        \addlinespace[2pt]
        Flow
        & Aircraft per unit time
        & Traffic throughput\\

        \addlinespace[2pt]
       MIC
        & Bits over $T$
        & Observable maneuver diversity under specified
          resources and uncertainty \\

        \bottomrule
    \end{tabularx}
\end{table}

The main contributions of the letter are threefold. First, we formulate a finite-horizon lifted mobility channel and define its capacity-cost function. For linear-Gaussian dynamics, we derive an exact log-determinant expression for the resulting MIC. Second, by whitening maneuver cost and trajectory uncertainty, we show that the optimal maneuver-resource allocation follows water-filling. This representation reveals how dynamics, sensing quality, uncertainty, and maneuver cost jointly determine the observable motion. Third, we further derive a Shannon-like information rate under linear growth of nondegenerate modes and asymptotically equal gains, and show that fixed-dimensional terminal observations have zero asymptotic rate under bounded mobility gains.

\textit{Notations:} $(\cdot)^{\mathsf T}$ denotes transpose; $\mathbf I$ denotes an identity
matrix of appropriate dimension; $\operatorname{tr}(\cdot)$,
$\det(\cdot)$, and $\operatorname{rank}(\cdot)$ denote the trace,
determinant, and rank, respectively. For symmetric matrices,
$\mathbf A\succeq\mathbf 0$ ($\mathbf A\succ\mathbf 0$) denotes positive
semidefiniteness (positive definiteness).
$\mathbb E[\cdot]$ denotes expectation,
$\operatorname{supp}(\cdot)$ denotes the support of a distribution,
and $[x]^+\triangleq\max\{x,0\}$.
The notation $\mathbf x\perp\mathbf y$ denotes statistical independence,
$\mathcal N(\mathbf m,\mathbf R)$ denotes a Gaussian distribution with
mean $\mathbf m$ and covariance $\mathbf R$.
All logarithms are to base two unless otherwise stated.

%The operational chain is maneuver label $\rightarrow$ input sequence $\rightarrow$ fixed closed-loop dynamics $\rightarrow$ noisy trajectory $\rightarrow$ identification. Although the transformation is dynamical rather than electromagnetic, the communication question remains: how many alternatives can a noisy input--output mechanism reliably separate?

\section{Mobility Channel and Information Capacity}
%Two maneuver sequences may lead to the same terminal state while producing different intermediate trajectories. How much information about the selected maneuver can the available observations reveal?

Consider a vehicle operating in a LAWN under a fixed sensing–communication–control configuration\cite{jin_codesign}. A mission-level maneuver sequence is executed through a common feedback controller, while onboard and infrastructure-assisted sensing produces the trajectory observation available to the network. The feedback controller is fixed across all mobility alternatives, which differ only through the exogenous maneuver sequence. Over an observation duration $T=K\Delta t$, let $\mathbf U=[\mathbf u_0^{\mathsf T},\ldots,
\mathbf u_{K-1}^{\mathsf T}]^{\mathsf T}$ denote the exogenous maneuver input sequence and $\mathbf Y=[\mathbf y_1^{\mathsf T},\ldots,
\mathbf y_K^{\mathsf T}]^{\mathsf T}$ denote the trajectory observations. For a known initial state and known system matrices, the
deterministic initial-state response is common to all maneuver
sequences and can be subtracted. By using $\mathbf Y$ as the resulting observations, we obtain a real-valued vector Gaussian channel:
\begin{equation}
    \mathbf Y=\mathbf G_K\mathbf U+\mathbf Z_K,
    \label{eq:channel}
\end{equation}
where $\mathbf G_K$ denotes the maneuver-to-trajectory mapping matrix,
$\mathbf{Z}_K\sim\mathcal N(\mathbf{0},\mathbf{R}_{Z,K})$ with $\mathbf{R}_{Z,K}\succ\mathbf{0}$ includes propagated process, execution, localization, and observation errors. Fig.~\ref{fig:physical-to-channel} summarizes how the physical mobility system induces the finite-horizon Gaussian channel in \eqref{eq:channel}. We further assume that $\mathbf{U}\perp\mathbf{Z}_K$, which follows the fact that the maneuver sequence is selected exogenously, prior to the realization of execution and observation uncertainties, and is statistically independent of these disturbances\footnote{If the maneuver input is instead generated by a causal policy that adapts to past observations, $\mathbf{U}$ and $\mathbf{Z}_K$ are generally no longer independent, and a causal information measure such as directed information may be required.}.

\begin{figure}[t]
    \centering
    \includegraphics[width=1.05\columnwidth]
    {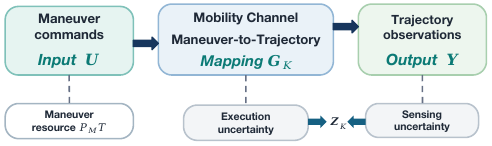}
    \caption{Finite-horizon mobility channel.}
    \label{fig:physical-to-channel}
\end{figure}

The channel matrix $\mathbf G_K$ is determined from the finite-horizon linear-Gaussian model
\begin{align}
    \mathbf x_{k+1}
    &=\mathbf A_k\mathbf x_k+\mathbf B_k\mathbf u_k
      +\mathbf w_k,\\
    \mathbf y_{k+1}
    &=\mathbf C_{k+1}\mathbf x_{k+1}+\mathbf v_{k+1},
    \label{eq:local}
\end{align}
where $\mathbf x_k$ is the mobility state, $\mathbf{u}_k$ is a mission-dependent maneuver command, $\mathbf{y}_k$ is the corresponding observation, and $\mathbf w_k$ and $\mathbf v_k$ represent process/execution and observation errors, respectively. The matrix $\mathbf A_k$ incorporates the common feedback
controller, e.g., stabilizing feedback, $\mathbf B_k$ maps commands to state increments, and $\mathbf C_{k+1}$ specifies the observed state components. For example, for a position-velocity state, position-only sensing uses
$\mathbf C_{k+1}=[\mathbf I\ \mathbf 0]$. Stacking these relations gives \eqref{eq:channel}, where the $(k,j)$ block of $\mathbf{G}_K$ is $\mathbf{C}_k\mathbf{H}(k,j+1)\mathbf{B}_j$ for $j<k$ with\footnote{After stacking (2) and (3), there exists a term \(\mathbf F_K\mathbf x_0\) related to the initial state \(\mathbf x_0\). Since both the transition matrix $F_K$ and state \(\mathbf x_0\) are known, subtracting this term leaves mutual information unchanged. Motion differences induced by the candidate commands remain in \(\mathbf G_K\mathbf U\).} 
\begin{equation}
\mathbf{H}(k,j+1)=\prod_{i=1}^{k-j-1} \mathbf{A}_{k-i},
\end{equation}
and $\mathbf{H}(k,k)=\mathbf{I}$.

%Alternatives change this request, not the stabilizing controller.

% For the scalar two-slot model $x_{k+1}=ax_k+u_k+w_k$, $y_k=x_k+v_k$, and $x_0=0$,
% \begin{equation}
%  \begin{bmatrix}y_1\\y_2\end{bmatrix}
%  =\begin{bmatrix}1&0\\a&1\end{bmatrix}
%  \begin{bmatrix}u_0\\u_1\end{bmatrix}
%  +\begin{bmatrix}z_1\\z_2\end{bmatrix}.                                      \label{eq:twoslot}
% \end{equation}
% The first action affects both observations; the last affects only $y_2$. This causal propagation gives $\mathbf{G}_K$ a lower block-triangular structure.

% \begin{remark}
% Although $\mathbf{Z}_K$ aggregates disturbances over the entire horizon, its per-slot uncertainty need not grow unboundedly with $K$. For the stable closed-loop tracking systems, the influence of an earlier disturbance decays over time. Under stationary or uniformly bounded process and observation noise, the state-error covariance therefore remains bounded and, in the time-invariant case, typically converges to a steady-state covariance. Hence, increasing the horizon enlarges the dimension of $\mathbf{R}_{Z,K}$ rather than increasing the uncertainty.
% \end{remark}

To quantify the maneuver effort required by different mobility alternatives, we introduce a positive-definite weighting matrix $\mathbf{Q}_K\succ\mathbf{0}$ and impose the quadratic resource constraint
\begin{equation}
    \mathbb E\!\left[
    \mathbf{U}^{\mathsf T}\mathbf{Q}_K\mathbf{U}
    \right]
    \leq P_M T ,
    \label{eq:cost}
\end{equation}
where $\mathbf{Q}_K$ defines the relative cost of different mobility directions. The term $P_M$ is a maneuver-resource rate and $P_M T$ denotes the total maneuver-resource budget over the horizon. In particular, higher weights in $\mathbf{Q}_K$ penalize mobility directions that are more costly to execute. Depending on the choice of $\mathbf{u}_k$ and
$\mathbf{Q}_K$, the quadratic cost may represent control energy,
acceleration effort, or trajectory-deviation effort. 

Given the Gaussian mobility channel \eqref{eq:channel} and resource constraint \eqref{eq:cost}, the finite-horizon MIC is defined as
\begin{equation}
 C_M^{\rm G}(T)\triangleq
 \sup_{p_\mathbf{U}:\,\mathbb E[\mathbf{U}^{\mathsf T}\mathbf{Q}_K\mathbf{U}]\leq P_M T} I(\mathbf{U};\mathbf{Y}). \label{eq:capacitydef}
\end{equation}
Mutual information is used to quantify the trajectory variation that is attributable to the intentional maneuver input. The quantity \(C_M^{\rm G}(T)\) is measured in bits over the observation horizon \(T\). Its time-normalized value \(C_M^{\rm G}(T)/T\) is measured in bits per unit time. Here, bits quantify the information that trajectory observations reveal about the intentional maneuver input. The metric is an information-theoretic measure of observable maneuver diversity, rather than a count of vehicles or a count of trajectories. In particular, stronger disturbances may increase the trajectory entropy $h(\mathbf{Y})$ without creating additional distinguishable maneuver alternatives. Under $\mathbf{U}\perp\mathbf{Z}_K$, 
\begin{align}
I(\mathbf{U};\mathbf{Y})
&=h(\mathbf{Y})-h(\mathbf{Y}\mid\mathbf{U})\nonumber\\
&=h(\mathbf{Y})-h(\mathbf{Z}_K),
\end{align}
which excludes the uncertainty-induced spread from the useful mobility
information.

\begin{proposition}[Gaussian mobility information capacity]
For~\eqref{eq:channel}--\eqref{eq:cost},
\begin{equation}
 \begin{aligned}
 C_M^{\rm G}(T)=\max_{\substack{\mathbf{K}_U\succeq\mathbf{0}\\
 \operatorname{tr}(\mathbf{Q}_K\mathbf{K}_U)\leq P_MT}}\;&\frac{1}{2}\log_2\det\!\bigl(\mathbf{I}+\\[-1mm]
 &\mathbf{R}_{Z,K}^{-1/2}\mathbf{G}_K\mathbf{K}_U\mathbf{G}_K^{\mathsf T}
 \mathbf{R}_{Z,K}^{-1/2}\bigr),
 \end{aligned}                                                                      \label{eq:logdet}
\end{equation}
where $\mathbf{K}_U$ denotes the covariance matrix of $\mathbf{U}$.
\end{proposition}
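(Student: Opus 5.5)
The argument is the standard maximum-entropy argument for a vector Gaussian channel, adapted to the weighted cost in \eqref{eq:cost}. I will prove matching upper and lower bounds on the supremum in \eqref{eq:capacitydef}.

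\emph{Reduction to covariances.} Take any admissible $p_{\mathbf U}$ with finite second moment. Let $\mathbf m=\mathbb E[\mathbf U]$ and $\mathbf K_U=\mathbb E[(\mathbf U-\mathbf m)(\mathbf U-\mathbf m)^{\mathsf T}]$. Then
\[
\mathbb E[\mathbf U^{\mathsf T}\mathbf Q_K\mathbf U]=\operatorname{tr}(\mathbf Q_K\mathbf K_U)+\mathbf m^{\mathsf T}\mathbf Q_K\mathbf m\geq \operatorname{tr}(\mathbf Q_K\mathbf K_U),
\]
using $\mathbf Q_K\succ\mathbf 0$. So every admissible input has a covariance in the feasible set of \eqref{eq:logdet}. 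Moreover, $\mathbf U-\mathbf m$ achieves the same mutual information as $\mathbf U$, because a deterministic shift of $\mathbf Y$ leaves it unchanged, and it spends no more cost. Conversely, any $\mathbf K_U$ in the feasible set is realized by $\mathcal N(\mathbf 0,\mathbf K_U)$ with equality in the cost.

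\emph{Upper bound.} Using the identity $I(\mathbf U;\mathbf Y)=h(\mathbf Y)-h(\mathbf Z_K)$ stated above, which relies on $\mathbf U\perp\mathbf Z_K$, the output covariance is $\mathbf G_K\mathbf K_U\mathbf G_K^{\mathsf T}+\mathbf R_{Z,K}$. The Gaussian maximum-entropy property bounds $h(\mathbf Y)$ by $\frac12\log_2\bigl((2\pi e)^{n}\det(\mathbf G_K\mathbf K_U\mathbf G_K^{\mathsf T}+\mathbf R_{Z,K})\bigr)$, where $n$ is the dimension of $\mathbf Y$. Also $h(\mathbf Z_K)=\frac12\log_2\bigl((2\pi e)^n\det\mathbf R_{Z,K}\bigr)$. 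Subtracting, and factoring $\mathbf R_{Z,K}^{1/2}$ on both sides inside the determinant (valid since $\mathbf R_{Z,K}\succ\mathbf 0$), gives the log-det expression in \eqref{eq:logdet}. Hence the supremum is at most the maximum over the feasible set.

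\emph{Achievability.} The input $\mathcal N(\mathbf 0,\mathbf K_U)$ makes $\mathbf Y$ Gaussian, so the entropy bound holds with equality.

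\emph{Main obstacle: the supremum is attained.} I will show the objective is continuous and concave on a compact feasible set. The set is compact because $\mathbf Q_K\succ\mathbf 0$ gives
\[
\lambda_{\min}(\mathbf Q_K)\operatorname{tr}(\mathbf K_U)\leq \operatorname{tr}(\mathbf Q_K\mathbf K_U)\leq P_MT,
\]
so it is closed and bounded in the PSD cone. The objective is concave because $\log\det$ is concave and is composed here with an affine map of $\mathbf K_U$. Continuity then gives a maximizer, which justifies writing $\max$. Finally, inputs with infinite second moment violate the cost constraint, so no admissible input falls outside this argument.
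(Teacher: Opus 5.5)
Your proposal is correct and follows the same route as the paper's proof: center the input to remove the $\mathbf m^{\mathsf T}\mathbf Q_K\mathbf m$ cost term, bound $h(\mathbf Y)$ by the Gaussian maximum entropy for output covariance $\mathbf G_K\mathbf K_U\mathbf G_K^{\mathsf T}+\mathbf R_{Z,K}$, and attain the bound with a zero-mean Gaussian input. Your compactness-and-continuity argument, which shows that the supremum over $\mathbf K_U$ is actually achieved, is a useful addition that the paper leaves implicit when it writes $\max$.
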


\begin{IEEEproof}
Let $\mathbf{U}$ have mean $\mathbf{m}_U$ and covariance $\mathbf{K}_U$. Its average cost is $\operatorname{tr}(\mathbf{Q}_K\mathbf{K}_U)
+\mathbf{m}_U^{\mathsf T}\mathbf{Q}_K\mathbf{m}_U$.
Centering $\mathbf{U}$ preserves mutual information while eliminating
the second cost term, so an optimal input may be taken to be zero mean.
For fixed $\mathbf{K}_U$, Gaussian extremality maximizes
$h(\mathbf{Y})$ for a Gaussian output with covariance
$\mathbf{G}_K\mathbf{K}_U\mathbf{G}_K^{\mathsf T}
+\mathbf{R}_{Z,K}$ and the bound is attained
by a zero-mean Gaussian input. Thus, it remains to optimize the covariance $\mathbf{K}_U$ under maneuver-resource constraint, which yields \eqref{eq:logdet}. The factor \(1/2\) follows from the real-valued Gaussian model.
\end{IEEEproof}

The covariance optimization in \eqref{eq:logdet} distributes the
available maneuver resource across observable mobility directions. After
uncertainty whitening, the determinant measures the multiplicative
expansion of trajectory variation induced by the intentional maneuver
input.

\section{Mobility Mode and Shannon-Like Law}

The MIC defined above does not yet reveal which kinds of motion are valuable. To do so, we define the cost- and uncertainty-whitened mapping
\begin{equation}
\widetilde{\mathbf{G}}_K=\mathbf{R}_{Z,K}^{-1/2}\mathbf{G}_K\mathbf{Q}_K^{-1/2}.    \label{eq:whiten}
\end{equation}
Defining $\widetilde{\mathbf U}=\mathbf Q_K^{1/2}\mathbf U$ converts the resource constraint into $\mathbb E[\|\widetilde{\mathbf U}\|^2]\leq P_MT$. Equal-length vectors in these coordinates therefore have equal maneuver cost. For a two-dimensional example with
$\mathbf Q=\operatorname{diag}(1,4)$, the commands $(1,0)^{\mathsf T}$ and $(0,1/2)^{\mathsf T}$ both have unit cost and become unit vectors after the transformation. The uncertainty-whitening factor $\mathbf R_{Z,K}^{-1/2}$ decorrelates the trajectory error and scales its covariance to the identity matrix. Thus, $\widetilde{\mathbf{G}}_K$ maps one unit of normalized maneuver resource into noise-normalized trajectory variation. Let $\lambda_1,\ldots,\lambda_{L(T)}>0$ be the nonzero eigenvalues of $\widetilde{\mathbf{G}}_K^{\mathsf T}\widetilde{\mathbf{G}}_K$, where $L(T)=\operatorname{rank}(\mathbf{G}_K)$.
\begin{proposition}[Mobility mode decomposition]
\begin{equation}
 C_M^{\rm G}(T)=\max_{\substack{p_\ell\geq0\\\sum_{\ell=1}^{L(T)}p_\ell\leq P_MT}}
 \frac{1}{2}\sum_{\ell=1}^{L(T)}\log_2(1+\lambda_\ell p_\ell),                  \label{eq:parallel}
\end{equation}
The optimal resource allocation is
\begin{equation}
 p_\ell^\star
 =
 \left(\nu-\frac{1}{\lambda_\ell}\right)^+,
 \qquad
 \sum_{\ell=1}^{L(T)}p_\ell^\star=P_MT,
 \label{eq:waterfill}
\end{equation}
where $\nu$ is the water level.
\end{proposition}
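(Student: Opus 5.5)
Starting from Proposition~1, I will change variables to the cost-whitened input covariance $\widetilde{\mathbf K}=\mathbf Q_K^{1/2}\mathbf K_U\mathbf Q_K^{1/2}$. Since $\mathbf Q_K\succ\mathbf 0$, this is a bijection of the positive semidefinite cone onto itself, and it maps the constraint to $\operatorname{tr}(\mathbf Q_K\mathbf K_U)=\operatorname{tr}(\widetilde{\mathbf K})$. The argument of the determinant in \eqref{eq:logdet} becomes $\mathbf I+\widetilde{\mathbf G}_K\widetilde{\mathbf K}\widetilde{\mathbf G}_K^{\mathsf T}$ with $\widetilde{\mathbf G}_K$ as in \eqref{eq:whiten}. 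By Sylvester's identity $\det(\mathbf I+\mathbf A\mathbf B)=\det(\mathbf I+\mathbf B\mathbf A)$, this determinant equals $\det(\mathbf I+\widetilde{\mathbf K}^{1/2}\widetilde{\mathbf G}_K^{\mathsf T}\widetilde{\mathbf G}_K\widetilde{\mathbf K}^{1/2})$. Hence the problem is to maximize $\frac12\log\det(\mathbf I+\widetilde{\mathbf K}^{1/2}\mathbf M\widetilde{\mathbf K}^{1/2})$ over $\widetilde{\mathbf K}\succeq\mathbf 0$ with $\operatorname{tr}\widetilde{\mathbf K}\le P_MT$, where $\mathbf M=\widetilde{\mathbf G}_K^{\mathsf T}\widetilde{\mathbf G}_K=\mathbf V\boldsymbol\Lambda\mathbf V^{\mathsf T}$. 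The matrices $\mathbf R_{Z,K}^{-1/2}$ and $\mathbf Q_K^{-1/2}$ are invertible, so $\operatorname{rank}\mathbf M=\operatorname{rank}\mathbf G_K=L(T)$, and the nonzero eigenvalues of $\mathbf M$ are $\lambda_1,\dots,\lambda_{L(T)}$.

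Next I reduce to diagonal inputs. Rotating by $\mathbf V$ preserves both the trace and the determinant, so I may set $\mathbf S=\mathbf V^{\mathsf T}\widetilde{\mathbf K}\mathbf V$ and write the objective as $\frac12\log\det(\mathbf I+\boldsymbol\Lambda^{1/2}\mathbf S\boldsymbol\Lambda^{1/2})$. Hadamard's inequality says the determinant of a positive definite matrix is at most the product of its diagonal entries, which here are $1+\lambda_\ell S_{\ell\ell}$. The diagonal matrix $\operatorname{diag}(S_{\ell\ell})$ has the same trace as $\mathbf S$, so it is also feasible and attains this upper bound. I therefore restrict to diagonal $\mathbf S$ with $p_\ell=S_{\ell\ell}$. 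Any power placed on zero eigenvalues contributes nothing to the objective, so without loss of generality it is moved onto the active modes, which relaxes the budget to $\sum_{\ell=1}^{L(T)}p_\ell\le P_MT$. This gives \eqref{eq:parallel}.

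The remaining problem maximizes a concave separable objective over a simplex. Each term $\log(1+\lambda_\ell p)$ is strictly increasing, so the budget constraint is tight at the optimum. I will write the KKT conditions: $\frac{\lambda_\ell}{2\ln 2\,(1+\lambda_\ell p_\ell)}=\mu-\eta_\ell$, with $\eta_\ell\ge0$ and $\eta_\ell p_\ell=0$. Setting $\nu=1/(2\ln2\,\mu)$, these conditions give $p_\ell^\star=(\nu-1/\lambda_\ell)^+$. The map $\nu\mapsto\sum_\ell(\nu-1/\lambda_\ell)^+$ is continuous, strictly increasing once it is positive, and ranges from $0$ to $\infty$, so a unique water level $\nu$ meets the budget. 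The objective is concave and Slater's condition holds, so the KKT conditions are sufficient as well as necessary. This establishes \eqref{eq:waterfill}.

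The main obstacle is the diagonalization step, specifically making sure that Hadamard's inequality is applied to a matrix whose diagonal matches the parallel-channel form. The Sylvester rewriting is what puts the matrix in that form. The step also has to handle the case where $\mathbf M$ has a nontrivial null space, which requires the $L(T)$ versus full-dimension bookkeeping described above.
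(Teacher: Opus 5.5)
Your proposal is correct and follows the paper's proof essentially step for step: cost whitening via $\widetilde{\mathbf K}_U=\mathbf Q_K^{1/2}\mathbf K_U\mathbf Q_K^{1/2}$, Sylvester's identity to move into the eigenbasis $\mathbf V$ of $\widetilde{\mathbf G}_K^{\mathsf T}\widetilde{\mathbf G}_K$ (the right singular vectors in the paper's SVD), Hadamard's inequality to reduce to diagonal $\mathbf V^{\mathsf T}\widetilde{\mathbf K}_U\mathbf V$, and KKT conditions for the water-filling solution. You fill in details the paper omits, namely the null-space bookkeeping, tightness of the budget, existence and uniqueness of the water level, and sufficiency of KKT via concavity and Slater, but the route is the same.
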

\begin{IEEEproof}
Define
$\widetilde{\mathbf K}_U
=\mathbf Q_K^{1/2}\mathbf K_U\mathbf Q_K^{1/2}$,
so that
$\operatorname{tr}(\widetilde{\mathbf K}_U)\leq P_MT$.
Using the singular-value decomposition
$\widetilde{\mathbf G}_K
=\mathbf H\mathbf\Sigma\mathbf V^{\mathsf T}$ and Sylvester's
determinant identity, \eqref{eq:logdet} can be expressed in the
eigenmode coordinates of $\widetilde{\mathbf G}_K$. Hadamard's inequality shows that the optimum is attained when
$\mathbf V^{\mathsf T}\widetilde{\mathbf K}_U\mathbf V$ is diagonal.
Denoting its diagonal entries by $p_\ell$ yields
\eqref{eq:parallel}, whose KKT conditions give the water-filling
solution \eqref{eq:waterfill}. Importantly, the eigenmode decomposition
depends on the known finite-horizon model and uncertainty
statistics. Hence, the water-filling solution is an offline design before the mission.
\end{IEEEproof}

Let $\mathbf v_\ell$ be a unit eigenvector of $\widetilde{\mathbf G}_K^{\mathsf T}\widetilde{\mathbf G}_K$ associated with $\lambda_\ell$. It represents a maneuver mode in the cost-whitened coordinates, with the corresponding command pattern in the original
coordinates given by $\mathbf Q_K^{-1/2}\mathbf v_\ell$. Hence, $\lambda_\ell$ measures the observable trajectory response generated by one unit of normalized maneuver resource, while
$p_\ell$ specifies how much resource is allocated to that maneuver pattern. Their product $\lambda_\ell p_\ell$ is therefore the effective mobility-to-noise ratio (MNR) of the $\ell$th mode. For example, $\lambda_\ell p_\ell=1$ means that the intentional trajectory variation along this mode has the same variance as the normalized uncertainty, whereas a much larger value indicates that the maneuver pattern can be distinguished more reliably. A zero eigenvalue corresponds to a maneuver pattern that produces no observable signature in the selected measurements. Accordingly, $\lambda_\ell$ jointly captures the dynamics and observation mapping through $\mathbf{G}_K$, maneuver cost through $\mathbf{Q}_K$, and uncertainty through $\mathbf{R}_{Z,K}$. Equation~\eqref{eq:waterfill} therefore allocates more resource to modes with larger observable motion per unit cost and leaves sufficiently weak modes inactive.

The finite-horizon MIC in \eqref{eq:parallel} depends on the full modal spectrum. A compact scalar law follows under the following sufficient conditions: useful modes have a linear growth rate $L(T)=B_MT+o(T)$ with $B_M>0$ and their non-zero mobility gains converge to a common value, i.e., $\lambda_{\min}(T)\to\lambda$, $\lambda_{\max}(T)\to\lambda$, $\lambda>0$, where $\lambda_{\min}(T)$ and $\lambda_{\max}(T)$ are the extrema of the nonzero gains\footnote{{Equal modal gains can arise when maneuver inputs and temporally white process noise pass through the same dynamics, with exact full-state sensing and uniform maneuver costs. For example, consider $x_{k+1}=ax_k+u_k+w_k$ with a known initial state, full-state observations $y_k=x_k$, and $\mathbf Q_K=\mathbf I$.
Let $w_k$ be independent Gaussian process noise with variance
$\sigma_w^2>0$. We can see that all mobility gains equal $\sigma_w^{-2}$ for every horizon, which satisfies the asymptotic equal-gain condition.}}. Thus, approximately $B_MT$ independent real modes are available through a $T$-horizon, and $B_M$ is termed the \emph{useful-mode creation rate}, which reflects how rapidly the control–dynamics–observation system reveals new independent spatiotemporal mobility directions as the observation horizon is extended\footnote{Linear growth of $L(T)$ is typical when each additional control interval introduces new maneuver commands whose effects are retained in the observed trajectory. For example, for an input $\mathbf{u}_k\in\mathbb{R}^{d_u}$ with double-integrator motion, each acceleration command $\mathbf u_k$ changes the next observed position by $(\Delta t^2/2)\mathbf u_k$. Under full-trajectory observation, each time step therefore contributes $d_u$ new independent mobility directions, giving $L(T)=d_u T/\Delta t$.}.

\begin{corollary}[Shannon-like long-horizon mobility information rate]
\begin{equation}
 R_M^{\rm G}
 \triangleq
 \lim_{T\to\infty}\frac{C_M^{\rm G}(T)}{T}
 =
 \frac{B_M}{2}\log_2(1+\mathrm{MNR}),
 \label{eq:shannonlike}
\end{equation}
where $\mathrm{MNR} \triangleq \frac{\lambda P_M}{B_M}$, $\frac{C_M^{\rm G}(T)}{T}$ denotes the finite-horizon mobility information rate.
\end{corollary}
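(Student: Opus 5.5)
We start from the modal form of Proposition~2. By \eqref{eq:parallel}, $C_M^{\rm G}(T)$ is the value of a concave parallel-channel problem over $L(T)$ modes with gains $\lambda_\ell\in[\lambda_{\min}(T),\lambda_{\max}(T)]$. The plan is to sandwich this value between two equal-gain problems. In an equal-gain problem, the water-filling solution \eqref{eq:waterfill} is explicitly uniform. Both bounds are then normalized by $T$, and we let $T\to\infty$ using $L(T)/T\to B_M$ and $\lambda_{\min}(T),\lambda_{\max}(T)\to\lambda$.

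\textbf{Lower bound.} Since every $\lambda_\ell\geq\lambda_{\min}(T)$ and $\log_2(1+\cdot)$ is increasing, we take the feasible uniform allocation $p_\ell=P_MT/L(T)$. This gives
\begin{equation*}
C_M^{\rm G}(T)\geq \frac{L(T)}{2}\log_2\!\Bigl(1+\frac{\lambda_{\min}(T)P_MT}{L(T)}\Bigr).
\end{equation*}

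\textbf{Upper bound.} Replace each $\lambda_\ell$ by $\lambda_{\max}(T)$, which can only enlarge the objective. The resulting symmetric concave problem is maximized by the uniform allocation, by Jensen's inequality or by \eqref{eq:waterfill} with equal gains. Hence
\begin{equation*}
C_M^{\rm G}(T)\leq \frac{L(T)}{2}\log_2\!\Bigl(1+\frac{\lambda_{\max}(T)P_MT}{L(T)}\Bigr).
\end{equation*}

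\textbf{Limit.} Divide both bounds by $T$ and write $L(T)/T=B_M+o(1)$. We have $T/L(T)\to 1/B_M$, which is finite because $B_M>0$, so $L(T)\to\infty$ and the ratio is well defined for large $T$. Each bound takes the form $\tfrac{1}{2}\,(L(T)/T)\,\log_2\!\bigl(1+\lambda_{\rm ext}(T)P_M\,T/L(T)\bigr)$, where $\lambda_{\rm ext}(T)$ stands for $\lambda_{\min}(T)$ or $\lambda_{\max}(T)$. By continuity of $(b,x)\mapsto \tfrac{b}{2}\log_2(1+x/b)$ on $b>0$, both bounds converge to $\tfrac{B_M}{2}\log_2(1+\lambda P_M/B_M)$. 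The squeeze then shows that the limit defining $R_M^{\rm G}$ exists and equals \eqref{eq:shannonlike} with $\mathrm{MNR}=\lambda P_M/B_M$.

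\textbf{Main obstacle.} The only delicate point is the upper bound. We need the maximum of the symmetric problem to sit at the uniform allocation, and we need the sandwich to be uniform in $L(T)$, which is growing. Concavity handles the first: $\sum_\ell\log(1+\lambda p_\ell)\leq L\log(1+\lambda\sum_\ell p_\ell/L)$. The second is handled by normalizing by $T$ before passing to the limit, so no uniformity beyond the stated convergence of the extreme gains is needed. A secondary check is that $\lambda_{\min}(T)>0$ for all large $T$, which holds since $\lambda_{\min}(T)\to\lambda>0$.
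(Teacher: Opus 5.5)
Your proposal is correct and follows the paper's proof essentially step for step. The lower bound uses the feasible equal allocation with $\lambda_{\min}(T)$, the upper bound replaces every gain by $\lambda_{\max}(T)$ and applies Jensen's inequality under $\sum_\ell p_\ell\leq P_MT$, and the squeeze uses $L(T)/T\to B_M$ together with convergence of both extreme gains to $\lambda$. Your continuity argument for $(b,x)\mapsto\frac{b}{2}\log_2(1+x/b)$ at $b=B_M>0$ just spells out the limit step the paper states in one line. Your positivity check on $\lambda_{\min}(T)$ is not needed, because these are by definition the nonzero eigenvalues of a positive semidefinite matrix.
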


\begin{IEEEproof}
Equal allocation $p_\ell=P_MT/L(T)$ is feasible in
\eqref{eq:parallel}. Since
$\lambda_\ell(T)\geq\lambda_{\min}(T)$, it gives
\[
 \frac{C_M^{\rm G}(T)}{T}
 \geq
 \frac{L(T)}{2T}
 \log_2\!\left(
 1+\lambda_{\min}(T)\frac{P_MT}{L(T)}
 \right).
\]
Conversely, since
$\lambda_\ell(T)\leq\lambda_{\max}(T)$, replacing all mobility gains by
$\lambda_{\max}(T)$ can only increase the capacity. Jensen's
inequality together with $\sum_\ell p_\ell\leq P_MT$ then gives
\[
 \frac{C_M^{\rm G}(T)}{T}
 \leq
 \frac{L(T)}{2T}
 \log_2\!\left(
 1+\lambda_{\max}(T)\frac{P_MT}{L(T)}
 \right).
\]
Under $\lambda_{\min}(T)\to\lambda$ and $\lambda_{\max}(T)\to\lambda$, both bounds converge to
$({B_M}/{2})\log_2(1+\lambda P_M/B_M)$, proving
\eqref{eq:shannonlike}.
\end{IEEEproof}

Here, $P_M/(B_M)$ is the asymptotic maneuver resource available per mobility mode, while $\lambda$ measures the observable trajectory response per unit normalized maneuver resource. This limit gives the maximum mutual information per unit physical time in the asymptotically equal-gain regime.

% For a family of asymptotically symmetric-mode systems with fixed
% $\lambda$ and $P_M$, we have 
% \begin{equation}
%  R_M^{\rm G}(B_M) \to \frac{\lambda P_M}{2\ln 2},
%  \qquad \textrm{when}~B_M\to\infty.
%  \label{eq:wideband}
% \end{equation}
% Thus, increasing $B_M$ raises the mobility-information rate, but with diminishing returns. Over a horizon $T$, approximately $B_MT$ modes share the total maneuver resource $P_MT$. The resource available per mode, i.e., $P_M/(B_M)$, decreases as more modes become available. This relation identifies three ways to improve $R_M^{\rm G}$. First, $B_M$ can be increased by introducing new controllable and observable maneuver directions, for example through additional independent actuation that reveal previously unobservable motion. Merely sampling the same physical trajectory more densely, without adding independent input-output directions, does not increase $B_M$. Second, stronger maneuver authority can increase the resource rate $P_M$. Third, improved system responsiveness, maneuver efficiency, disturbance rejection, or sensing quality can increase the whitened modal gains $\lambda_\ell$. The first mechanism creates or reveals additional modes, whereas the latter two provide more resource to or improve the efficiency of the existing modes.

Finally, we would like to emphasize that the extensive-mode condition is not automatic. If only a terminal output of fixed dimension is observed, $L(T)\leq L_{\max}$, and $\lambda_\ell(T)\leq\lambda_{\max}<\infty$, then
\begin{equation}
 {C_M^{\rm G}(T)}\leq\frac{L_{\max}}{2}
 \log_2(1+\frac{\lambda_{\max}P_MT}{L_{\max}}),                                   \label{eq:terminal}
\end{equation}
which only grows at the order of $\log T$, resulting in a vanishing information rate when $T\to\infty$. This is because a fixed-dimensional terminal observation retains only the aggregate effect of the maneuver sequence on the endpoint. Consequently, various trajectories departing in different directions but reach the same terminal state cannot be distinguished. In contrast, full-trajectory sensing can preserve an extensive collection of independent temporal maneuver patterns. Therefore, under bounded mobility gains, a positive long-horizon mobility-information rate requires the number of useful modes to grow linearly with time.
\begin{remark}[Intermediate observation architectures]
Full-trajectory recording is not necessary for a positive long-horizon information rate. Periodic sampling can also support a positive rate if it retains a number of modes proportional to the horizon whose gains remain uniformly bounded away from zero. For sliding-window processing, the long-horizon rate also depends on whether information extracted from earlier windows is preserved.
\end{remark}

\section{Numerical Results}
\label{sec:numerical}
We consider the following dynamic model
\begin{equation}
 x_{k+1}=0.85x_k+u_k+w_k,\qquad
 y_k=x_k+v_k.
 \label{eq:scalar_sim}
\end{equation}
which represent one lateral deviation coordinate of a UAV under a
stabilizing controller. The factor 0.85 describes the decay of the mean deviation in the absence of new maneuver inputs.
We set $x_0=0$, $\Delta t=1$, $\sigma_w^2=0.04$, $\sigma_v^2=0.01$, $\mathbf{Q}_K=\mathbf{I}$, and $P_M=0.01$. Here, $w_k$ represents wind or execution uncertainty, whereas $v_k$ represents localization or observation noise. This example illustrates the general unequal-gain case, whose capacity is evaluated using the exact water-filling
solution in \eqref{eq:parallel}.

\begin{figure}[t]
    \centering
    \includegraphics[width=0.95\columnwidth]
    {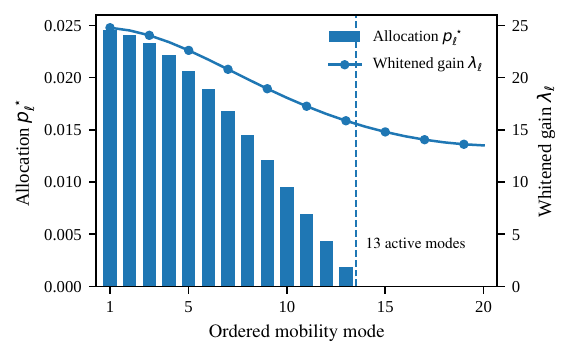}
    \caption{Finite-horizon mobility eigenmodes and water-filling
    allocation for \eqref{eq:scalar_sim}.}
    \label{fig:mode_waterfill}
\end{figure}
Fig.~\ref{fig:mode_waterfill} illustrates the finite-horizon result in
Proposition~2. It can be seen that even for the time-invariant system in
\eqref{eq:scalar_sim}, the mobility modes have different gains, because an earlier maneuver input can influence multiple subsequent
observations, whereas a later input has fewer opportunities to appear
in the observed trajectory. Consequently, water-filling allocates no
resource to seven sufficiently weak modes at $K=20$.%These modes represent spatiotemporal maneuver patterns over the entire horizon, rather than individual time slots.
%This example also shows that the exact finite-horizon capacity is given by \eqref{eq:parallel} instead of \eqref{eq:shannonlike} while requires the 

% Next, we illustrate the additional asymptotic conditions of Corollary~1 required by Shannon-like expression in \eqref{eq:shannonlike}. We consider an
% asymptotically homogeneous modal benchmark with
% $L(T)=T$, corresponding to $B_M=1/2$, and $\lambda_\ell(T)
%  =
%  \lambda\!\left[
%  1+\frac{0.8}{\sqrt{T}}
%  \cos\!\left(
%  \frac{2\pi(\ell-\frac{1}{2})}{L(T)}
%  \right)
%  \right],~\lambda=20$.
% We can observe that both $\lambda_{\min}(T)$ and
% $\lambda_{\max}(T)$ converges to $\lambda$ as $T\to\infty$.
% With $P_M=0.01$, Fig.~\ref{fig:shannon_convergence} shows that the
% exact water-filling rate remains between the lower and upper bounds
% used in the proof of Corollary~1. As the modal variance decreases,
% the exact mobility information capacity converges to
% \begin{equation}
%  B_M\log_2\!\left(
%  1+\frac{\lambda P_M}{B_M}
%  \right)
%  \simeq 0.132~\text{bits/slot},
% \end{equation}
% thereby numerically illustrating the Shannon-like mobility law.

\begin{figure}[t]
    \centering
    \includegraphics[width=0.95\columnwidth]
    {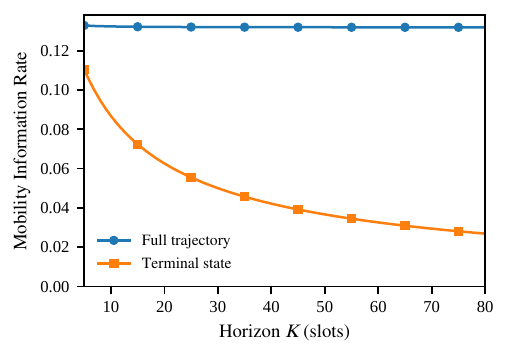}
    \caption{Finite-horizon mobility-information rate $C_M^G(T)/T$
    under full-trajectory and terminal-only observations for
    \eqref{eq:scalar_sim}.}
    \label{fig:trajectory_terminal}
\end{figure}

Finally, Fig.~\ref{fig:trajectory_terminal} compares two observation
architectures for the same physical model \eqref{eq:scalar_sim}. The
full-trajectory observer retains $\mathbf{Y}=[y_1,\ldots,y_K]^{\mathsf T}$, whereas the terminal observer retains only $y_K$. It can be seen that the full-trajectory observer preserves an increasing number of modes and maintains an approximately constant finite-horizon rate. In contrast, the mobility information rate corresponding to terminal state decays with $K$, consistent with \eqref{eq:terminal}. At $K=80$, the full-trajectory and terminal-only rates are approximately $0.132$ and $0.027$ bits/slot, respectively. This distinction validates that increasing the maneuver horizon alone does not create a positive long-horizon mobility-information rate $R_M^G$. The observation architecture must also preserve a growing number of informative motion modes.

\section{Extensions and Conclusion}
\label{sec:conclusion}
The proposed mobility information capacity quantifies the distinguishability of intentional maneuver alternatives rather than their physical admissibility. In particular, the capacity-achieving Gaussian input has unbounded support and may violate actuator, fly-zone boundary, or collision constraints. Let $\mathcal{E}_T$ denote a horizon-wise violation event and let $\mathcal{U}_{\epsilon}$ collect the maneuver sequences satisfying
$\Pr(\mathcal{E}_T\mid\mathbf{U}=\mathbf{u})\leq\epsilon$. A corresponding
safety-constrained capacity is
\begin{equation}
 C_{M,\epsilon}(T)
 \triangleq
 \sup_{\substack{
 p(\mathbf{U}):\,\operatorname{supp}(p)\subseteq\mathcal{U}_{\epsilon}\\
 \mathbb E[\mathbf{U}^{\mathsf T}\mathbf{Q}_K\mathbf{U}]\leq P_MT}}
 I(\mathbf{U};\mathbf{Y})
 \leq C_M^{\rm G}(T).
 \label{eq:safety_capacity}
\end{equation}
Under these admissibility constraints, the Gaussian input may no longer be feasible, and the log-determinant expression generally serves only as an upper bound.

The present formulation also assumes a common feedback controller and
an exogenously selected maneuver sequence. If the maneuver input adapts
causally to past observations, e.g., $\mathbf{u}_k=f_k(\mathbf{y}_{0:k})$, ordinary mutual information may need to be replaced by a causal measure such as directed information~\cite{massey_directed}. Likewise, multi-vehicle collision constraints couple feasible maneuver sequences across agents. Extending mobility information capacity to such causal and multi-agent settings is an important direction for future work.

This letter introduced mobility information capacity as a motion-centric information measure for low-altitude wireless networks. A lifted linear-Gaussian model yields a finite-horizon log-determinant capacity, and whitening maneuver cost and observation uncertainty gives mobility eigenmodes with a water-filling resource allocation. Under linear mode growth and asymptotically equal gains, the normalized capacity takes the Shannon-like form \(R_M^{\rm G}=\frac{B_M}{2}\log_2(1+\mathrm{MNR})\). The observation architecture determines which maneuver modes remain informative as the horizon grows. In particular, fixed-dimensional terminal observations have a vanishing normalized rate under uniformly bounded mobility gains. These results provide a baseline for comparing observable maneuver diversity under specified dynamics, resource budgets, and uncertainty. Incorporating physical admissibility and multi-vehicle interactions remains necessary for a safety-constrained airspace-capacity measure.

\bibliographystyle{IEEEtran}
\bibliography{space_time_MIC_WCL}
\end{document}